\newcommand{\E}{\mathbb{E}}
\newcommand{\R}{\mathbb{R}}
\newcommand{\Alpha}{\mathrm{A}}
\renewcommand{\P}{\mathbb{P}}
\newcommand{\Cov}{\mathrm{Cov}}
\newcommand{\diag}{\mathrm{diag}}
\setlist[1]{itemsep=-7pt}
\newtheorem{theorem}{Theorem}[section]
\declaretheorem[style=definition,numberwithin=section]{procedure}
\begin{document}

\title{Robust Spatial Extent Inference with a Semiparametric Bootstrap Joint Testing Procedure}

\author[1]{Simon N. Vandekar}
\author[2]{Theodore D. Satterthwaite}
\author[2]{Cedric H. Xia}
\author[2]{Kosha Ruparel}
\author[2]{Ruben C. Gur}
\author[2]{Raquel E. Gur}
\author[3]{Russell T. Shinohara}

\affil[1]{Vanderbilt University, Department of Biostatistics}
\affil[2]{University of Pennsylvania School of Medicine, Department of Psychiatry}
\affil[3]{University of Pennsylvania School of Medicine, Department of Biostatistics, Epidemiology, and Informatics}

\singlespace
\maketitle

{\bf Abstract:}
Spatial extent inference (SEI) is widely used across neuroimaging modalities to study brain-phenotype associations that inform our understanding of disease.
Recent studies have shown that Gaussian random field (GRF) based tools can have inflated family-wise error rates (FWERs).
This has led to fervent discussion as to which preprocessing steps are necessary to control the FWER using GRF-based SEI.
The failure of GRF-based methods is due to unrealistic assumptions about the covariance function of the imaging data.
The permutation procedure is the most robust SEI tool because it estimates the covariance function from the imaging data.
However, the permutation procedure can fail because its assumption of exchangeability is violated in many imaging modalities.
Here, we propose the (semi-) parametric bootstrap joint (PBJ; sPBJ) testing procedures that are designed for SEI of multilevel imaging data.
The sPBJ procedure uses a robust estimate of the covariance function, which yields consistent estimates of standard errors, even if the covariance model is misspecified.
We use our methods to study the association between performance and executive functioning in a working fMRI study.
The sPBJ procedure is robust to variance misspecification and maintains nominal FWER in small samples, in contrast to the GRF methods.
The sPBJ also has equal or superior power to the PBJ and permutation procedures.
We provide an {\tt R} package \url{https://github.com/simonvandekar/pbj} to perform inference using the PBJ and sPBJ procedures.

{\bf Keywords:} Spatial extent inference; FWER; PBJ; bootstrap; Neuroimaging.\\~\\
Please address correspondence to: \\
Simon N. Vandekar\\
2525 West End Ave., \#1136\\
Department of Biostatistics\\
Vanderbilt University\\
Nashville, TN 37203\\
{\tt simon.vandekar@vanderbilt.edu}




\section{Introduction}

Investigating neuroimage-phenotype associations is important for understanding how stimuli, environment, and diseases affect the brain.
Various neuroimaging modalities are used to study brain-behavior associations. For example, imaging scientists can study how brain activation measured with functional magnetic resonance imaging (fMRI) is related to performance on an executive functioning task \citep{satterthwaite_functional_2013}.
Spatial extent inference (SEI) is a widely used tool across neuroimaging modalities to study image-phenotype associations.
SEI views the statistical image as a realization of a stochastic process and thresholds the image at a given value in order to compute the distribution of the spatial extent of contiguous clusters of activation.
A p-value is determined for each cluster based on its spatial extent by comparing the cluster extent to the distribution of the maximum cluster size under a predetermined null hypothesis \citep{friston_assessing_1994,worsley_detecting_1999}.

A number of recent studies have shown that the standard tools used for SEI that rely on gaussian random field (GRF) approximations can have highly inflated false positive rates \citep{eklund_cluster_2016,silver_false_2011,greve_false_2018}.
These studies have provoked significant discussion about the conditions necessary for the GRF-based tools to produce nominal family-wise error rates \citep[FWERs;][]{kessler_reevaluating_2017,mueller_commentary:_2017,eklund_cluster_2018,slotnick_cluster_2017}.
A good deal of this discussion has led to ad hoc preprocessing and thresholding suggestions for particular data types that are based on simulation results \citep{eklund_cluster_2016,greve_false_2018,flandin_analysis_2016,slotnick_cluster_2017,mueller_commentary:_2017}.
For example, simulation results suggest using particular values preprocessing parameters such that the GRF based methods yield nominal FWERs \citep{flandin_analysis_2016}.
However, GRF-based methods cannot control the FWER generally due to unrealistic assumptions about the covariance structure of the neuroimaging data.
As a result ad hoc preprocessing decisions are unlikely to yield nominal FWERs across a wide range of data sets.

To date, only permutation testing procedures have been shown to robustly control the spatial extent false positive rate in most neuroimaging datasets \citep{eklund_cluster_2016,winkler_permutation_2014}.
However, the permutation procedure may not control the false positive rate at the nominal level in some cases because it relies on the assumption of exchangeability.
Exchangeability includes the assumption of homoskedasticity; namely, that the covariance structures of all the subjects are unaffected by the covariates.
This assumption is often violated in neuroimaging datasets because many types of imaging datasets require multilevel models, in which parameters are estimated separately for each subject and subsequently analyzed at the group level.
Exchangeability may also be violated because the covariance structure of imaging data is known to be affected by in-scanner motion, age, and diagnostic status \citep[see e.g.][]{satterthwaite_impact_2012,power_development_2010,woodward_mapping_2016}.


In the statistics literature several procedures have been developed for hypothesis testing in spatial datasets.
\citet{sun_false_2015} developed false discovery rate (FDR) controlling procedures in a Bayesian context in order to perform inference pointwise or on predefined clusters.
Procedures to address large-scale hypothesis testing for correlated tests could also be utilized to perform inference on images \citep[e.g.][]{yekutieli_resampling-based_1999,benjamini_control_2001,romano_control_2008}.
These procedures, which perform inference pointwise have the desirable property that they identify locations of the image or map where the mean is likely to be non-zero.
However, due to the high-dimensionality, small sample sizes, and complex covariance functions of neuroimaging data, many studies lack power to perform inference directly on the mean function.
As a solution, other proposed methods for controlling the FDR in spatial data improve power by using predefined clusters and performing inference on the clusters \citep{benjamini_false_2007,pacifico_false_2004}.
SEI is similar to these cluster based approaches because a significant cluster only guarantees (with a prespecified probability) that an arbitrary nonzero portion of the cluster has nonzero mean.
However, unlike the cluster based FDR methods SEI does not require that the user defines the clusters {\it a priori}.

Here we develop two methods to perform SEI that use a (semi-) parametric bootstrap joint (PBJ; sPBJ) testing procedure to compute test statistics and estimate the joint distribution of test statistics (Section \ref{sec:theory}).
The PBJ procedure accomodates violations of exchangeability by using subject weights to deweight noisy observations at each voxel.
If the selected weights are inversely proportional to the subject-level variances then the covariance function is correctly specified and the parameter estimators have minimum variance.
In order to address the case that subject weights are not correctly specified we introduce the sPBJ procedure that uses a robust ``sandwich" estimate of the covariance function, which is an asymptotically consistent estimator of the covariance function of the stochastic process \citep[estimator $\text{HC}_3$ in][]{long_using_2000,mackinnon_heteroskedasticity-consistent_1985}.
Because the covariance function of the stochastic process is complex in imaging data, we leverage a bootstrap procedure to estimate the distribution of the maximum cluster size \citep{vandekar_faster_2017}.
These methods are asymptotically robust across any range of preprocessing parameters and cluster forming thresholds (CFTs) because they estimate the covariance structure from the observed data.

We utilize a subset of 1,000 subjects from the Philadelphia Neurodevelopmental Cohort \citep[PNC;][]{satterthwaite_neuroimaging_2014,satterthwaite_philadelphia_2016} and a novel approach to generate realistic simulated data in which the assumption of exchangeability is violated in order to show that the PBJ and sPBJ procedures are robust to violations of this assumption (Section \ref{sec:simulation}).
We demonstrate the procedures by studying the association between task performance and executive functioning measured with fMRI using the $N$-back task \citep[Sections \ref{sec:nback} and \ref{sec:nbackanalysis};][]{ragland_working_2002,satterthwaite_functional_2013}.
While the procedure is generally applicable for inference in spatial data sets, we developed an {\tt R} package to perform the PBJ and sPBJ SEI procedures on NIfTI images that is available for download at \url{https://github.com/simonvandekar/pbj} and from Neuroconductor \citep[\url{https://www.neuroconductor.org/};][]{muschelli_neuroconductor:_2018}.

\section{Spatial extent inference}

Let the statistical map under the null, $Z_0(v)$, be a stochastic process on $\mathcal{L}^\infty(\mathbb{B},\R)$, where $\mathcal{L}^\infty(\mathbb{B}, \R)$ denotes the normed space of bounded functions from a bounded subset $\mathbb{B} \subset \R^3$ to $\R$ equipped with the uniform norm.
Let $z(v) \in \mathcal{L}^\infty(\mathbb{B},\R)$ be the observed statistical map that is sampled from an arbitrary stochastic process $Z(v)$ on $\mathcal{L}^\infty(\mathbb{B},\R)$.
We assume that the sample paths of $Z_0(v)$ and $Z(v)$ are continuous almost everywhere.
Let $\mathcal{C}(z(v), z_0) : \mathcal{L}^\infty(\mathbb{B},\R) \times \R \to \R^\infty$ be a function that thresholds the map $\lvert z(v)\rvert$ with a given cluster forming threshold (CFT), $z_0 \in \R$, and computes the volume of all suprathreshold contiguous clusters in the image.
$\mathcal{C}(z(v), z_0)$ is in $\R^\infty$ because there is an arbitrary number of contiguous clusters possible.
To give the clusters a natural ordering, we assume that, for all sample paths $z(v)$, $\mathcal{C}(z(v), z_0)_{j} \ge \mathcal{C}(z(v), z_0)_{k}$ for $j<k$ and that there exists finite $J$ such that $ \mathcal{C}(z(v), z_0)_j = 0$ for all $j\ge J$.
That is, the clusters are indexed in decreasing order and there are a finite number of nonzero clusters.

\begin{procedure}[Spatial extent inference]
\label{proc:seht}
For a fixed $z_0 \in \R$ we set
\begin{equation}
\label{eq:pvalue}
p_{j} = \P\left\{ \mathcal{C}(Z_0(v), z_0)_1 \ge \mathcal{C}(z(v), z_0)_j \right\}
\end{equation}
and call cluster $j$ ``significant" if $p_j< \alpha$ for some predetermined rejection threshold $\alpha \in [0,1]$.
\end{procedure}

The p-value \eqref{eq:pvalue} computed by Procedure \ref{proc:seht} represents the probability of observing a cluster as or more extreme than $\mathcal{C}(z(v), z_0)_j$, under the global null that $z$ is a sample from $Z_0$ \citep{friston_assessing_1994}.
Typically, the threshold $z_0$ is chosen such that suprathreshold regions have uncorrected p-values less than some common CFT threshold, e.g. $z_0 = 2.32$, which corresponds to a voxel-wise $p<0.01$ under Gaussianity assumptions.
The bootstrap samples generated by Procedures \ref{proc:PBJbootstrap} and \ref{proc:sPBJbootstrap}, below, can be used to perform the SEI described in Procedure \ref{proc:seht} by approximating the distribution of $\mathcal{C}(Z_0(v), z_0)_1$ across $B$ bootstrap samples.

\section{Statistical theory for the bootstrap procedures}
\label{sec:theory}

\subsection{Overview}
We propose two approaches to model the covariance function of the statistical map.
The PBJ procedure assumes the within subject variance estimates are correct.
The sPBJ procedure utilizes a working covariance function.
An advantage of the sPBJ is that even if the working covariance function is incorrect the procedure still yields consistent standard errors.

Let $Y_i(v)$, for $i=1,\ldots, n$, be stochastic processes on $\mathcal{L}^\infty(\mathbb{B},\R)$.
$Y_i(v)$ represent the imaging data and $\mathbb{B}$ is the bounded space of interest (in neuroimaging, the brain).
We assume that
\begin{align}\label{eq:model}
\begin{split}
Y_i(v) & = X_{i0} \alpha(v) + X_{i1} \beta(v) + E_i(v)\\
& = X_i \zeta(v) + E_i(v),
\end{split}
\end{align}
where $X_{i0} \in \R^{m_0}$ is a row vector of nuisance covariates including the intercept, $X_{i1}\in \R^{m_1}$ is a row vector of variables of interest, $m=m_0 + m_1$, $X_i = [ X_{i0}, X_{i1}]$, the parameter image column vectors $\alpha(v) \in \mathcal{L}^\infty(\mathbb{B}, \R^{m_0})$, $\beta(v) \in \mathcal{L}^\infty(\mathbb{B}, \R^{m_1})$, $\zeta(v) = [\alpha(v)^T, \beta(v)^T ]^T$, error $E_i(v)$ with $\E\{ E_i(v)\} = 0$ and covariance function $\Sigma_i(v, w)  = \Cov\left\{E_i(v), E_i(w)\right\} < \infty$ for all $v,w \in \mathbb{B}$.
Here, all capital letters denote random variables or stochastic processes.
Throughout, we assume that the conditional mean $\E\{ Y_i(v) \mid X_i\} = X_i \zeta(v)$ of model \eqref{eq:model} is correctly specified.
This model is a form of function-on-scalar regression \citep{ramsay_functional_2005,reiss_fast_2010,morris_functional_2015} and encompasses many models of interest in image analysis.
Let $X_0 = (X_{10}^T, \ldots, X_{n0}^T) \in \R^{n\times m_0}$, $X_1 = (X_{11}^T, \ldots, X_{n1}^T) \in \R^{n\times m_1}$, $X = [X_0, X_1]$, $Y(v) = (Y_1(v), \ldots, Y_n(v))$.
Lastly, to describe sampling procedures we define $V$ to be the number of locations in the image.


\subsection{PBJ: parametric covariance functions}

Without weights, the basic PBJ model assumes
\begin{equation}
\label{eq:PBJassumption}
\Sigma_i(v,w) = \Sigma(v,w)
\end{equation}
for all $i=1,\ldots,n$; i.e. every subject has the same image covariance function.
This can be generalized to heteroskedastic variances by using weighted regression (with subject specific weights) where the weights are, ideally, proportional to the inverse of the standard deviation, $\sqrt{\Sigma_i(v,v)}$.
Inference using the weighted procedure is conditional on the weights, which are usually estimates (or approximations) of the standard deviation of $E_i(v)$.
This approach is akin to similar multilevel approaches used in neuroimaging software \citep{woolrich_bayesian_2009,penny_random_2007}.
If the weights are incorrectly specified, then the standard error estimates are asymptotically biased and the FWER may not be controlled at the nominal level.
For simplicity, we describe the unweighted version of the PBJ procedure here.
In Section \ref{sec:sPBJ} we describe the robust covariance function estimator used by sPBJ that allows subject specific weights.

Under assumption \eqref{eq:PBJassumption} we can compute the $\chi^2_{m_1}$-statistic image
\begin{equation}
\label{eq:chisquarestat}
Z_0(v) = \Phi_{m_1}\left[\Phi^{-1}_F\left\{T(v) \right\}\right],
\end{equation}
where $T(v)$ is the F-statistic image for the test of the hypothesis
\begin{equation}
\label{eq:null}
H_0(v): \beta(v) = 0,
\end{equation}
$\Phi_F^{-1}$ denotes the inverse cumulative distribution function (CDF) of an F-distribution on $m_1$ and $n-m$ degrees of freedom, and $\Phi_{m_1}$ is the CDF of a $\chi^2_{m_1}$ statistic \citep{vandekar_faster_2017}.
The transformation in \eqref{eq:chisquarestat} makes $Z_0(v)$ asymptotically a diagonal Wishart process \citep{vandekar_faster_2017},
\begin{equation}
\label{eq:PBJdiagwishart}
Z_0 \sim \diag\left\{ \mathcal{W}(m_1, \Sigma_Z) \right\}
\end{equation}
under the null \eqref{eq:null}, where $\Sigma(v,w) = \text{Cor}\left\{E_i(v), E_i(w)\right\}$.
The transformation on the right hand side of \eqref{eq:chisquarestat} improves the finite sample performance of the following procedure, which samples observations from a diagonal Wishart process, conditional on an estimate of $\Sigma_Z$.

\begin{procedure}
\label{proc:PBJbootstrap}
\begin{enumerate}
\item Regress the observed image vector $Y(v)$ onto $X$ as in model \eqref{eq:model} and obtain the residual image
\[
R(v) =
\begin{bmatrix}
R(1) \\
R(2) \\
\vdots \\
R(V)
\end{bmatrix} \in \R^{V,\times n}.
\]
\item Let $\hat\Sigma^{1/2}_Z(v)$ be $R(v)$ after the rows are standardized to be norm 1.
\item For $b=1, \ldots, B$ sample $n\times m_1$ standard normal random variables, $Z_{n,b} \in \R^{n \times m_1}$ and compute the $\diag\left\{ \mathcal{W}(m_1, \hat \Sigma_Z) \right\}$-statistic image
\[
Z_b(v) = \text{diag}\left\{\hat\Sigma^{1/2}_Z(v) \times Z_{n,b} Z_{n,b}^T \times \hat\Sigma^{1/2}_Z(v)^T\right\}.
\]
\end{enumerate}

\end{procedure}

The maximum cluster size of the bootstrap images $Z_b(v)$ are then used to compute the cluster $p$-values \eqref{eq:pvalue}.

\subsection{sPBJ: sandwich covariance functions}
\label{sec:sPBJ}

Here, we utilize an estimating equations approach to obtain parameter estimates and robust standard errors.
Standard error estimates using estimating equations are robust to misspecification of the covariance function which can occur when the subject variance function $\Sigma_i(v, w)$ differs across $i$.
We use the following estimating equation
\begin{equation}\label{eq:ee}
\Psi(\zeta,v;Y) = \sum_{i=1}^n S^{-1}_i(v)(Y_i(v) - X_i\zeta(v)) X_i^T = 0,
\end{equation}
where $S_i(v)$ is a weight for subject $i$ that is an estimator for $\E \left\{ \Sigma_i(v,v) \mid X_i \right\}$ for $v\in \mathbb{B}$.
$S_i(v)$ does not need to be an estimator of $\E \left\{ \Sigma_i(v,v) \mid X_i \right\}$ for the sPBJ procedure to be asymptotically valid, however the best choices are estimators that are nearly unbiased for $\Sigma_i$ and have small variance.
Because our outcome data are stochastic processes indexed by $v \in \mathbb{B}$, \eqref{eq:ee} is also a stochastic process.

We define $\hat \zeta(v)$ as the solution to \eqref{eq:ee}, which is equivalent to the least squares estimate for $\zeta(v)$ at each location $v$ in the image using the weight function $S^{-1}_i(v)$.
$\hat \zeta(v)$ is unbiased in finite samples, if the mean function is correctly specified, and has covariance function equal to 
\begin{align}\label{eq:robustvariance}
\begin{split}
\Sigma_\zeta(v, w)
& = \Cov\left[n^{1/2}\{ \hat \zeta(v) - \zeta(v)\}, n^{1/2}\{ \hat \zeta(w) - \zeta(w)\} \right] \\
& = \Alpha(v)^{-1} \Omega(\zeta, v, w) \Alpha(w)^{-1},
\end{split}
\end{align}
where
\begin{align}\label{eq:AlphaOmega}
\begin{split}
\Alpha(v) & = \lim_{n\to\infty} n^{-1}\sum_{i=1}^n\E\left[ S^{-1}_i(v) X_i^TX_i \right]\\
\Omega(\zeta, v, w) & = \lim_{n\to\infty} n^{-1}\sum_{i=1}^n\E \left[ \{S^{-1}_i(v)S^{-1}_i(w)\} \{Y_i(v) - X_i\zeta(v)\}\{Y_i(w) - X_i\zeta(w)\} X_i^TX_i\right],
\end{split}
\end{align}
and we assume both limits \eqref{eq:AlphaOmega} exist.
We define \eqref{eq:AlphaOmega} as limits because the underlying observations are not identically distributed due to having different subject-level variances.
The covariance \eqref{eq:robustvariance} is derived by performing a Taylor expansion of the estimating equation \eqref{eq:ee} centered at the true value of the parameter $\zeta(v)$ and then computing their asymptotic covariances \citep[see e.g.][p. 300]{boos_essential_2013}.
The following theorem gives a useful way to compute an estimate of the sandwich covariance function for the parameter of interest.
We use a modification (by \citet{long_using_2000}) of the covariance estimate of \citet{mackinnon_heteroskedasticity-consistent_1985}, which has shown to be best at achieving nominal type 1 error rates in small samples over several heteroskedasticity-consistent estimators \citep{long_using_2000}.
\begin{theorem}
\label{thm:betacovariance}
The covariance function $\Sigma_\beta : \mathbb{B} \times \mathbb{B} \to \R^{m_1 \times m_1}$ for the parameter of interest $n^{1/2}(\hat\beta-\beta)$ is
\begin{align*}
\Sigma_\beta(v, w)
= & \Alpha_\beta(v)^{-1} \Omega_\beta(v,w) \Alpha_\beta(w)^{-1}
\end{align*}
where $\Alpha_\beta(v)$ and $\Omega_\beta(v,w)$ are as defined in Supplementary Section \ref{sec:notation}.

The estimator
\begin{align}
\label{eq:sigma}
\hat{\Sigma}_{\beta}(v,w) = & \hat{\Alpha}_\beta(v)^{-1} X_1^T S^{-1/2}(v) P^{X_0}(v) Q(v) Q(w) P^{X_0}(w) S^{-1/2}(w) X_1 \hat{\Alpha}_\beta(w)^{-1},
\end{align}
is consistent for $\Sigma_\beta(v,w)$, where all objects are as defined in Supplementary Section \ref{sec:notation}.

\end{theorem}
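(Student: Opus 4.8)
The plan is to reduce everything for $\beta$ to the general sandwich form already derived for the full parameter $\zeta$ in \eqref{eq:robustvariance}, and then to prove convergence of the plug-in estimator \eqref{eq:sigma} factor by factor. First I would partition $\Sigma_\zeta(v,w) = \Alpha(v)^{-1}\Omega(\zeta,v,w)\Alpha(w)^{-1}$ according to the split $\zeta=[\alpha^T,\beta^T]^T$, writing $\Alpha(v)$ and $\Omega(\zeta,v,w)$ as block matrices indexed by the nuisance ($0$) and interest ($1$) coordinates; the quantity $\Sigma_\beta(v,w)$ is then the lower-right block of the product. The key algebraic device is the block-inverse (Schur complement) identity, which expresses the $\beta$-rows of $\Alpha(v)^{-1}$ through the residual-forming projection $P^{X_0}(v)=I-S^{-1/2}(v)X_0(X_0^T S^{-1}(v)X_0)^{-1}X_0^T S^{-1/2}(v)$ that orthogonalizes the weighted design against the nuisance columns. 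This is precisely the Frisch--Waugh--Lovell decomposition applied to the weighted least-squares problem: $\hat\beta(v)$ is the coefficient from regressing the $P^{X_0}(v)$-projected weighted response on the projected weighted $X_1$. Carrying the projection through $\Alpha(v)^{-1}$ and $\Omega(\zeta,v,w)$ yields the $\Alpha_\beta(v)$ and $\Omega_\beta(v,w)$ of the Supplement and establishes the first display.

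For the consistency claim I would show each factor of \eqref{eq:sigma} converges to its population counterpart and then combine by the continuous-mapping theorem. The bread $\hat\Alpha_\beta(v)=n^{-1}X_1^T S^{-1/2}(v)P^{X_0}(v)S^{-1/2}(v)X_1$ is a normalized sum of independent terms, so a weak law of large numbers for independent, non-identically distributed summands—legitimate under the assumed existence of the limit \eqref{eq:AlphaOmega} together with a uniform moment bound—gives $\hat\Alpha_\beta(v)\to_p\Alpha_\beta(v)$, hence $\hat\Alpha_\beta(v)^{-1}\to_p\Alpha_\beta(v)^{-1}$ by continuity. The substance is the meat term $X_1^T S^{-1/2}(v)P^{X_0}(v)Q(v)Q(w)P^{X_0}(w)S^{-1/2}(w)X_1$, where $Q(v)$ is the diagonal matrix of HC$_3$-adjusted residuals dividing each residual $\hat E_i(v)$ by the weighted-design leverage factor $1-h_{ii}(v)$. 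Here I would (i) replace residuals by true errors and bound the difference by $o_p(1)$ using $\hat\zeta(v)-\zeta(v)=O_p(n^{-1/2})$; (ii) argue that $\max_i h_{ii}(v)\to 0$, so the correction $(1-h_{ii}(v))^{-1}\to 1$ uniformly and perturbs the estimator only negligibly; and (iii) apply the same independent-but-not-identical law of large numbers to the error-based average so the meat converges in probability to $\Omega_\beta(v,w)$. Assembling the three limits yields $\hat\Sigma_\beta(v,w)\to_p\Sigma_\beta(v,w)$.

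The hard part will be controlling the meat term uniformly enough that both the finite-sample HC$_3$ correction and the substitution of residuals for true errors are asymptotically harmless despite heteroskedastic, non-identically distributed observations. Bounding $\max_i h_{ii}(v)$ requires a design condition ensuring that no single subject dominates the weighted information, and the law of large numbers for the meat requires a moment condition on the products $S_i^{-1}(v)E_i(v)E_i(w)$ that holds pointwise in $(v,w)$. These are exactly the regularity conditions that make the ``consistent even if $S_i$ is misspecified'' claim go through, since at no point does the argument use that $S_i(v)$ estimates $\E\{\Sigma_i(v,v)\mid X_i\}$: only correct specification of the mean in \eqref{eq:model} and existence of the limits \eqref{eq:AlphaOmega} are needed.
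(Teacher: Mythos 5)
Your proposal is correct and matches the paper's overall skeleton --- extract the lower-right block of the sandwich \eqref{eq:robustvariance} for the first display, then prove consistency of \eqref{eq:sigma} componentwise --- but it diverges from the paper's proof at the one genuinely technical step, and in an instructive way. For the first display the paper simply asserts that the result follows from extracting the bottom-right block of $\Sigma_\zeta(v,w)$ and declines to show the algebra; your Schur-complement/Frisch--Waugh--Lovell route through $P^{X_0}(v)$ is precisely the omitted derivation, so there you supply more detail than the paper does. For consistency, the paper also rewrites \eqref{eq:sigma} in terms of $\hat\Alpha_\beta$, $\hat G$, and the pieces $\hat\Omega_{A,B}$, and reduces the claim to the convergences \eqref{eq:convergence1}--\eqref{eq:convergence2}; the bread terms are handled, as in your argument, by a weak law for independent non-identically distributed summands under \eqref{eq:convergenceassumptions}. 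The difference is in how the dependence of $Q(v)$ on $\hat\zeta$ is controlled: you propose an explicit decomposition --- substitute true errors for residuals with an $o_p(1)$ remainder via $\hat\zeta(v)-\zeta(v)=O_p(n^{-1/2})$, and kill the $\mathrm{HC}_3$ leverage correction by a separate design condition $\max_i h_{ii}(v)\to 0$ --- whereas the paper invokes a uniform weak law for averages evaluated at estimated parameters (Theorem 7.3 of \citealp{boos_essential_2013}), verified by exhibiting a dominating function $M(x_i,y_i,s_i,p_{i,i})$ bounding the $\zeta$-derivative of the summand in a neighborhood of the truth, with the leverage factor $P^{X}_{i,i}(v)^{-1}$ absorbed into $M$ rather than sent to one. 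Both mechanisms are standard and valid; your route is more elementary and makes the role of each approximation visible, but it imports an assumption (asymptotically negligible leverages) that is not among the paper's stated conditions \eqref{eq:convergenceassumptions}, while the paper's route trades that for the moment condition $\E M(X_i,Y_i,S_i,P_{i,i})<\infty$ and avoids needing residual-to-error substitution as a separate step. Your closing observation --- that nowhere is it used that $S_i(v)$ actually estimates $\E\{\Sigma_i(v,v)\mid X_i\}$, only correct mean specification and existence of the limits \eqref{eq:AlphaOmega} --- is exactly the point of the theorem and is consistent with the paper's remarks following \eqref{eq:ee}.
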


The proof of Theorem \ref{thm:betacovariance} is given in Supplementary Section \ref{sec:proofs}.

The Wald statistic variable, $Z(v)$, for the test of the hypothesis image
\begin{equation}\label{eq:H0}
H_0(v): \beta(v) = 0,
\end{equation}
 where $v$ is fixed, is asymptotically chi-squared (under the null),
\begin{equation}\label{eq:teststatistic}
Z(v) = n \hat \beta(v)^T \hat\Sigma_\beta(v)^{-1} \hat \beta(v) \sim \chi^2_{m_1}
\end{equation}
by Theorem 8.3 of \citet[p. 345]{boos_essential_2013}.

\begin{theorem}
\label{thm:m1distribution}
When $m_1=1$ the statistical image $Z(v)$ from \eqref{eq:teststatistic} for the test of \eqref{eq:H0} is asymptotically a diagonal Wishart process,
\begin{equation}\label{eq:wishart}
Z(v) \sim \diag\left\{\mathcal{W}(1; \Sigma )\right\},
\end{equation}
where
\begin{equation*}
\Sigma(v,w) = \Omega_\beta^{-1/2}(v,v) \Omega_\beta(v,w) \Omega_\beta^{-1/2}(w,w).
\end{equation*}
\end{theorem}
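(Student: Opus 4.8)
The plan is to reduce to the scalar case and recognize $Z(v)$ as the square of a studentized Gaussian process. When $m_1=1$ every quantity is scalar-valued, so $\hat\Sigma_\beta(v,v)$ is a positive number and
\[
Z(v) = n\,\hat\beta(v)^2/\hat\Sigma_\beta(v,v) = \left\{ \sqrt{n}\,\hat\beta(v)/\sqrt{\hat\Sigma_\beta(v,v)} \right\}^2.
\]
By definition a diagonal Wishart process $\diag\{\mathcal{W}(1;\Sigma)\}$ is the pointwise square $G(v)^2$ of a mean-zero Gaussian process $G$ with $\Cov\{G(v),G(w)\}=\Sigma(v,w)$. Hence it suffices to show that the studentized process $\tilde G_n(v)=\sqrt{n}\,\hat\beta(v)/\sqrt{\hat\Sigma_\beta(v,v)}$ converges weakly in $\mathcal{L}^\infty(\mathbb{B},\R)$ to such a $G$ with the stated covariance, after which the continuous mapping theorem applied to the squaring map delivers the claim.

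First I would invoke the estimating-equations asymptotics already developed for \eqref{eq:ee}. The Taylor expansion of $\Psi$ about $\zeta(v)$ underlying \eqref{eq:robustvariance} gives, jointly over the index set, that $\sqrt{n}\{\hat\beta(\cdot)-\beta(\cdot)\}$ converges to a Gaussian process with covariance $\Sigma_\beta(v,w)=\Alpha_\beta(v)^{-1}\Omega_\beta(v,w)\Alpha_\beta(w)^{-1}$. Under the null \eqref{eq:H0} the mean vanishes, so $\sqrt{n}\,\hat\beta(\cdot)$ converges to a mean-zero Gaussian process $B$ with $\Cov\{B(v),B(w)\}=\Sigma_\beta(v,w)$.

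Next I would standardize by the population scale and check that the normalized covariance collapses to the stated form. Because $m_1=1$, $\Alpha_\beta(v)$ is a scalar and cancels:
\begin{align*}
\frac{\Sigma_\beta(v,w)}{\sqrt{\Sigma_\beta(v,v)\,\Sigma_\beta(w,w)}}
&= \frac{\Alpha_\beta(v)^{-1}\Omega_\beta(v,w)\Alpha_\beta(w)^{-1}}{\Alpha_\beta(v)^{-1}\Omega_\beta(v,v)^{1/2}\,\Alpha_\beta(w)^{-1}\Omega_\beta(w,w)^{1/2}} \\
&= \Omega_\beta^{-1/2}(v,v)\,\Omega_\beta(v,w)\,\Omega_\beta^{-1/2}(w,w),
\end{align*}
which is exactly $\Sigma(v,w)$; in particular $\Sigma(v,v)=1$, so that pointwise $G(v)\sim N(0,1)$ and $Z(v)\sim\chi^2_1$, consistent with \eqref{eq:teststatistic}. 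Thus $\sqrt{n}\,\hat\beta(\cdot)/\sqrt{\Sigma_\beta(\cdot,\cdot)}$ converges to a mean-zero Gaussian process with covariance $\Sigma$. To exchange the population scale for the estimator I would use the consistency established in Theorem \ref{thm:betacovariance}, namely $\hat\Sigma_\beta(v,v)\to_p\Sigma_\beta(v,v)$, so that a Slutsky argument gives $\tilde G_n(\cdot)$ the same weak limit $G$. Squaring then yields $Z(v)\to G(v)^2\sim\diag\{\mathcal{W}(1;\Sigma)\}$.

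The hard part will be promoting these statements from pointwise to the process level: the finite-dimensional convergence is immediate from the preceding theory, but weak convergence of $\sqrt{n}\,\hat\beta(\cdot)$ as a random element of $\mathcal{L}^\infty(\mathbb{B},\R)$ requires asymptotic tightness (equicontinuity) of the residual-based partial sums defining $\Omega_\beta$, and the Slutsky step requires $\hat\Sigma_\beta(v,v)\to_p\Sigma_\beta(v,v)$ to hold uniformly in $v$ so that the continuous mapping theorem applies to the ratio and the square in the uniform norm. I would establish the uniform consistency by the same arguments used for Theorem \ref{thm:betacovariance}, together with the boundedness and almost-everywhere continuity assumed of the sample paths, which also ensure the squaring map is continuous on the relevant subset of $\mathcal{L}^\infty(\mathbb{B},\R)$.
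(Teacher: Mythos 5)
Your proposal is correct and follows essentially the same route as the paper's proof: finite-dimensional Gaussian convergence of $\sqrt{n}\,\hat\beta(\cdot)$ from the estimating-equation asymptotics, studentization so the scalar $\Alpha_\beta$ factors cancel to give $\Sigma(v,w)=\Omega_\beta^{-1/2}(v,v)\Omega_\beta(v,w)\Omega_\beta^{-1/2}(w,w)$, the continuous mapping theorem for the square (the definition of the diagonal Wishart), and process-level weak convergence via an asymptotic-equicontinuity condition, which is exactly the paper's appeal to assumption \eqref{eq:vandervaart} and Theorem 18.14 of \citet{van_der_vaart_asymptotic_2000}. If anything, you are more explicit than the paper on two points it leaves implicit: the algebraic cancellation of $\Alpha_\beta$ relating $\Sigma_\beta$ to the stated $\Sigma$, and the Slutsky step (with uniform-in-$v$ consistency of $\hat\Sigma_\beta(v,v)$ from Theorem \ref{thm:betacovariance}) needed to replace the population scale by its estimator in the studentized process.
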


The proof of Theorem \ref{thm:m1distribution} is also given in the Appendix.
When $m_1>1$ the joint distribution of $Z(v)$ is challenging to sample from, so that case is beyond the scope of this work.

The following procedure describes how to sample from \eqref{eq:wishart}, conditional on the estimate \eqref{eq:sigma}.

\begin{procedure}\label{proc:sPBJbootstrap}
Let all objects be as defined in Section \ref{sec:notation}. In order to generate samples $Z_{V,b}$, for $b=1,\ldots, B$ from a $\diag\left\{\mathcal{W}(1, \hat \Sigma)\right\}$:
\begin{enumerate}
\item Perform the weighted regression of $Y(v)$ onto $X$ using weights $S^{-1/2}(v)$ to obtain the weighted residuals $R(v)$.
\item Divide the residual vector $R(v)$ elementwise by the diagonal of $P^{X}(v)$ to obtain $Q(v)$.

\item Construct the intermediate $V \times n$ matrix
\[
\hat\Sigma^{1/2}_* =
\begin{bmatrix}
X_1^T S^{-1/2}(1) P_{X_0}(1) \diag\{Q(1)\} \\
X_1^T S^{-1/2}(2) P_{X_0}(2) \diag\{Q(2)\} \\
\vdots \\
X_1^T S^{-1/2}(V) P_{X_0}(V) \diag\{Q(V)\}
\end{bmatrix}
\]
and let $\hat \Sigma^{1/2}$, be $\hat \Sigma^{1/2}_*$ after the rows are standardized to be norm 1.
\item For $b=1, \ldots, B$ sample $n$ independent standard normal random variables, $Z_{n,b}$, and compute the asymptotically normal test statistic vector
\[
Z_{V,b} = \hat \Sigma^{1/2} Z_{n,b}.
\]
\end{enumerate}
\end{procedure}

Recall that Procedures \ref{proc:PBJbootstrap} and \ref{proc:sPBJbootstrap} are used to estimate the probability distribution necessary to use Procedure \ref{proc:seht}.

\section{Fractal $N$-back fMRI data description}
\label{sec:nback}

We use the $N$-back task fMRI data from the PNC to generate realistic data for the simulation analyses.
The fractal $N$-back task is a working memory task where complex geometric fractals were presented to the subject for 500 ms, followed by a fixed interstimulus interval of 2500 ms \citep{ragland_working_2002}. For each subject, approximately 11.5 minutes of $n$-back task fMRI were acquired as BOLD-weighted timeseries (231 volumes; TR = 3000ms; TE = 32ms; FoV = 192x192mm; resolution 3mm isotropic); detailed scanning parameters are discussed in \citet{satterthwaite_neuroimaging_2014,satterthwaite_functional_2013}.
Images were presented under three conditions: In the 0-back condition, participants responded with a button press to a specified target.
For the 1-back condition, participants responded if the current fractal was identical to the previous one; in the 2-back condition, participants responded if the current fractal was identical to the item presented two trials previously \citep{satterthwaite_neuroimaging_2014}.
We include in our analyses a subset of 1,100 subjects who satisfied image quality and clinical exclusionary criteria from the total of 1,601 scanned as part of the PNC.

We apply a series of standard image preprocessing steps: distortion-correction using FSL's FUGUE, time-series preprocessing, rigid registration, brain extraction, temporal filtering, and spatial smoothing \citep{smith_susannew_1997,jenkinson_fsl_2012}.
subject-level models are fit using a linear model in FSL's FILM software including convolved task boxcar functions for each level of the task and 24 motion covariates.
The 2-0 back contrast of these estimates is used as the subject-level outcome for all analyses presented in this paper.
The subject-level 2-0 back contrast's variance is also utilized as a voxel level weight in Section \ref{sec:nbackanalysis}.

The result of the subject-level preprocessing is a parameter estimate image $Y_i(v)$ for each subject, $i=1, \ldots, n$, and a variance estimate $S_i(v)$ for each subject's parameter estimate image.
The parameter estimate images are then submitted to a second level analysis where the parameter estimates are regressed onto covariates (e.g. performance, age, and motion) in order to understand the association between task activation and performance at each voxel across subjects.

\section{Simulation Analysis}
\label{sec:simulation}

\subsection{Methods}

In order to generate nonexchangeable data with heteroskedastic covariance functions we first residualize the $N$-back neuroimaging data set to a set of independent variables that have known associations with the imaging variable, including age, sex, in-scanner-motion (mot), and task performance \citep[$d'$;][]{macmillan_signal_2002}, using the unpenalized spline model
\begin{equation}
\label{eq:nbackmodel}
Y_i(v) = \alpha_0(v) + \alpha_1(v) \times \text{sex}_i + f(v, \text{age}_i) + g(v, \text{mot}_i) + h(v, d'_i) + \epsilon_i(v),
\end{equation}
where $\text{sex}_i$ is an indicator for males, and $f$, $g$, and $h$ are fit using thin plate splines with 10 knots \citep{wood_gams_2002,wood_package_2015}.
Fixed degree spline bases were used to model continuous variables to ensure that the mean of the residuals was completely unassociated with the covariates.

As a result of fitting this model, the residualized images have little to no mean association with the independent variables.
However, the residualized images do not remove the effect the independent variables can have on the covariance structure of the imaging data.
By drawing subsamples of subjects from the residualized images together with the independent variables we generate data that constitutes the global null hypothesis because the covariates are unassociated with the mean, however these covariates can still have an effect on the covariance function of the data.
To demonstrate how the covariates affect the covariance function of the residualized data we create covariance matrices using nodes defined in the Power atlas for the 100 lowest and 100 highest valued subjects for motion and $d'$ \citep[Figure \ref{fig:networks};][]{power_functional_2011}.
Fox example, high motion is associated with stronger interhemispheric and anterior-posterior correlations than low motion (Figure \ref{fig:networks}).

\begin{figure}
\center
\includegraphics[width=0.9\linewidth]{./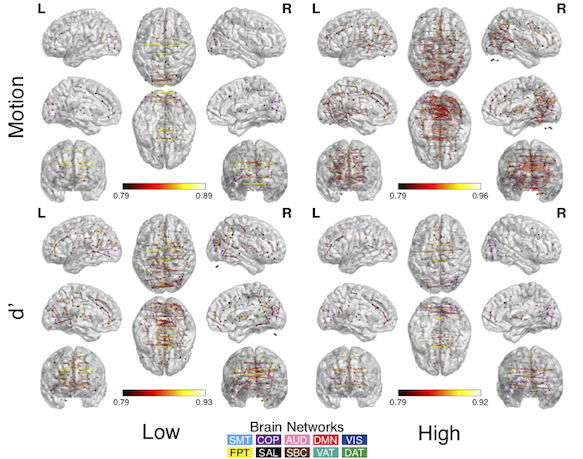}
\caption{Correlation weights for the 100 most extreme observations of motion and $d'$. High motion and low $d'$ are each associated with stronger inter-hemisphere and anterior-posterior correlations. Sphere color indicates network membership. The networks are thresholded for visualization.}
\label{fig:networks}
\end{figure}

In each of 1,000 simulations we draw a bootstrap sample of size $n\in \{25, 50, 200\}$ and fit the model
\begin{equation}
\label{eq:simmodel}
\epsilon_i(v) = \beta_0(v) + \beta_1(v) \times \text{sex}_i + \beta_2(v)\times \text{age}_i + \beta_3(v) \times\text{mot}_i + \beta_4(v)\times d' + \epsilon_{ib}(v),
\end{equation}
where $\epsilon_i(v)$ are the residuals from model \eqref{eq:nbackmodel}.
We then perform the test of $\beta_{j}(v) = 0$ for $j\in \{ 3, 4\}$ using SEI of a $\chi^2_1$-statistical map with two CFTs  ($p<0.01, 0.005$; $\chi^2_1 > 6.63, 7.88$).

We compare the PBJ and sPBJ (using 500 bootstraps) procedures to a standard GRF-based method \citep[easythresh;][]{jenkinson_fsl_2012} and to the permutation procedure \citep[using 500 permutations;][]{winkler_permutation_2014}.
For the simulation study we use the PBJ and sPBJ with uniform weights (PBJ(1); sPBJ(1)) and with the inverse of motion as the weight (PBJ(mot); sPBJ(mot)), which down-weights subjects with high motion.
The permutation procedure is fit using a single group variance estimate (Perm) as well as a two group variance (Perm Grp), which attempts to weaken the assumption of exchangeability by allowing for group differences in variance at each location \citep{winkler_permutation_2014}.
The groups were determined by the upper and lower quantiles of motion.
The SEI FWER is estimated by computing the proportion of simulations where a cluster is determined significant at two rejection thresholds $\alpha=0.01$ and $\alpha=0.05$.

We use a novel procedure to assess power: for each of 1,000 simulations we randomly select 4 voxels within the brain and create an image of spheres, $s(v)$, with radii of 8, 10, 11, and 12 voxels centered at each of the 4 voxels.
We set $s(v)=1$ when $v$ is inside a sphere (and inside the brain) and 0 otherwise.
$s(v)$ represents the locations in the image where there is a true effect.
Because the voxels are randomly selected the spheres could overlap or be cropped depending on the proximity to the edge of the brain, thus the number and size of clusters is random across simulations.
For locations where $s(v)=1$, the $d'$ variable is first orthogonalized to the other covariates and then we set
\[
\beta_4(v) = \frac{\sigma_Y(v)}{\sigma_{d'}} \times 0.4,
\]
where $\beta_4$ is as defined in \eqref{eq:simmodel} and $\sigma_{d'}$ is the variance of the orthogonalized $d'$ variable.
This approach ensures that the voxel level effect size of $d'$ is approximately equal to 0.4.
Power curves as a function of cluster size are estimated for each sample size using shape constrained additive models (SCAMs) using the {\tt scam} package in {\tt R} \citep{pya_shape_2015}.
SCAMs were used to ensure the estimated power curves were monotonically increasing.
Power was only assessed for procedures that had nearly nominal type 1 error rate.

\subsection{Results}

Simulations under the global null \eqref{eq:null} were used to assess FWER for each of the SEI procedures.
For the test of the motion variable all methods have inflated FWER for smaller sample sizes ($n=25,50$), except the PBJ and sPBJ procedures with motion dewighting (Figure \ref{fig:fwer}).
Because the sPBJ procedure without weights is still asymptotically valid the FWER falls to the nominal level with increasing sample size.
The permutation procedures and the PBJ procedure without uniform weights actually increase FWER with increasing sample size.
The GRF-based method always has a FWER above 0.9.
This pattern holds across both CFTs (0.01, 0.005) and both nominal FWER (0.01, 0.05).

Our novel null simulation approach allows us to assess how the heteroskedasticity of real variables of interest affect the FWER for each procedure.
For the test of $d'$ most procedures controlled the FWER at the nominal level (Figure \ref{fig:fwer}).
The FWER for the test of $d'$ using the PBJ procedure with motion deweighting increases above 0.1 with both CFTs when the desired level is 0.05.
The GRF-based procedure maintains a FWER above 0.4.

\begin{figure}
\center
\includegraphics[width=0.9\linewidth]{./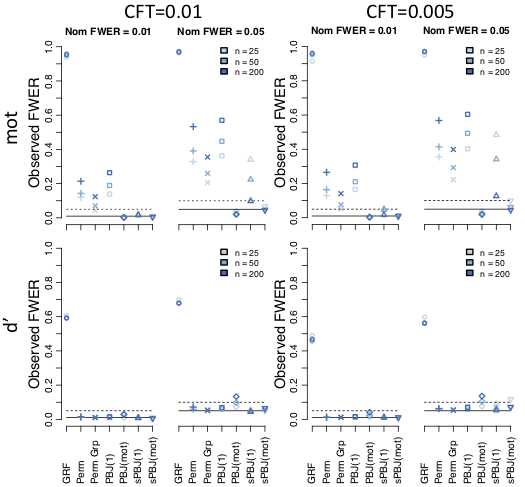}
\caption{FWER results for each of the SEI procedures. The FWER is inflated for the motion covariate except for the PBJ and sPBJ procedure with motion deweighting.
sPBJ methods approach the nominal FWER as the sample size increases.
FWERs for $d'$ are near the nominal level for all methods except the GRF procedure.
Dashed lines are 0.05 for Exp FWER$=$0.01 and 0.1 for Exp FWER$=$0.05.
CFT$=$cluster forming threshold; Nom FWER$=$nominal FWER; mot$=$motion.}
\label{fig:fwer}
\end{figure}

Simulations where known signal is added to the image are used to assess the power of each method.
The PBJ and sPBJ with motion deweighting have comparable power, whereas all methods without motion deweighting have similar, and lower, power (Figure \ref{fig:powerresults}).
Oddly, the permutation procedure with group variances has a detrimental effect on the power of the procedure.
The results demonstrate that the capability of the PBJ and sPBJ procedures to down-weight observations with high motion improves the power to detect smaller clusters of activation in small sample sizes (Figure \ref{fig:powerresults}).

\begin{figure}
\center
\includegraphics[width=0.9\linewidth]{./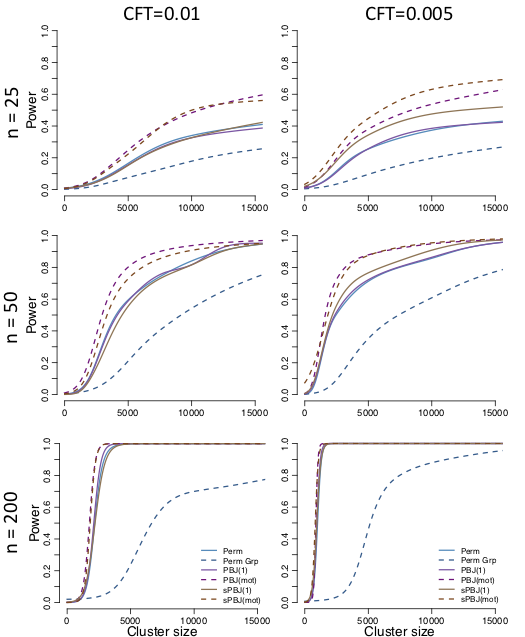}
\caption{Power results for the SEI procedures that had nearly FWER control in the null simulations. Each plot shows the power versus cluster size for cluster forming thresholds (CFTs) of 0.01 and 0.005. The PBJ and sPBJ with motion deweighting have the largest power across sample sizes. All methods without motion deweighting have lower power and the permutation method with group variance estimators has lowest power.}
\label{fig:powerresults}
\end{figure}

\section{N-back data analysis}
\label{sec:nbackanalysis}

We next investigate the effect of task performance on 2-0 back activation.
We model the 2-0 back subject-level parameter estimates using model \eqref{eq:simmodel}.
We use the inverse of the subjects' estimated variance images as weights in the estimating equation \eqref{eq:ee} for the PBJ and sPBJ procedures (using 5,000 bootstraps).
We compare the results using the (s)PBJ procedures to the permutation procedure without weights (using 5,000 permutations) with a CFT of 0.005.

The PBJ and sPBJ procedures have a larger rejection volume and a more extreme average statistical value in the rejected regions (Table \ref{tab:nbackresults}).
The permutation and sPBJ procedures identify one large contiguous cluster and one small cluster of deactivation.
The PBJ fails to detect the deactivated cluster, which is marginally significant.
Statistical maps for the PBJ and sPBJ procedure are similar as they both deweight high variance subjects.

\begin{table}
\center
\begin{tabular}{lll}
Procedure & Rej Vol (voxels) & Mean $\lvert Z \rvert$ \\
\hline
PBJ & 31,419 & 3.80 \\
sPBJ & 32,720 & 3.79 \\
Perm & 24,893 & 3.61
\end{tabular}
\caption{Rejection volume and mean $Z$-statistic in rejected voxels in the $N$-back data for each SEI procedure.}
\label{tab:nbackresults}
\end{table}

Task performance is positively associated with activation in the thalamus, dorsolateral prefrontal cortex (DLPFC), cingulate, and precuneus (Figure \ref{fig:nback}).
Performance was also associated with diffuse white matter activation that may connect functional gray matter regions or be related to reaction time differences correlated with task performance \citep{yarkoni_bold_2009,mazerolle_confirming_2010,ding_detection_2018}.
Task performance is negatively associated with activation in the ventromedial prefrontal cortex.


\begin{figure}
\center
\includegraphics[width=0.9\linewidth]{./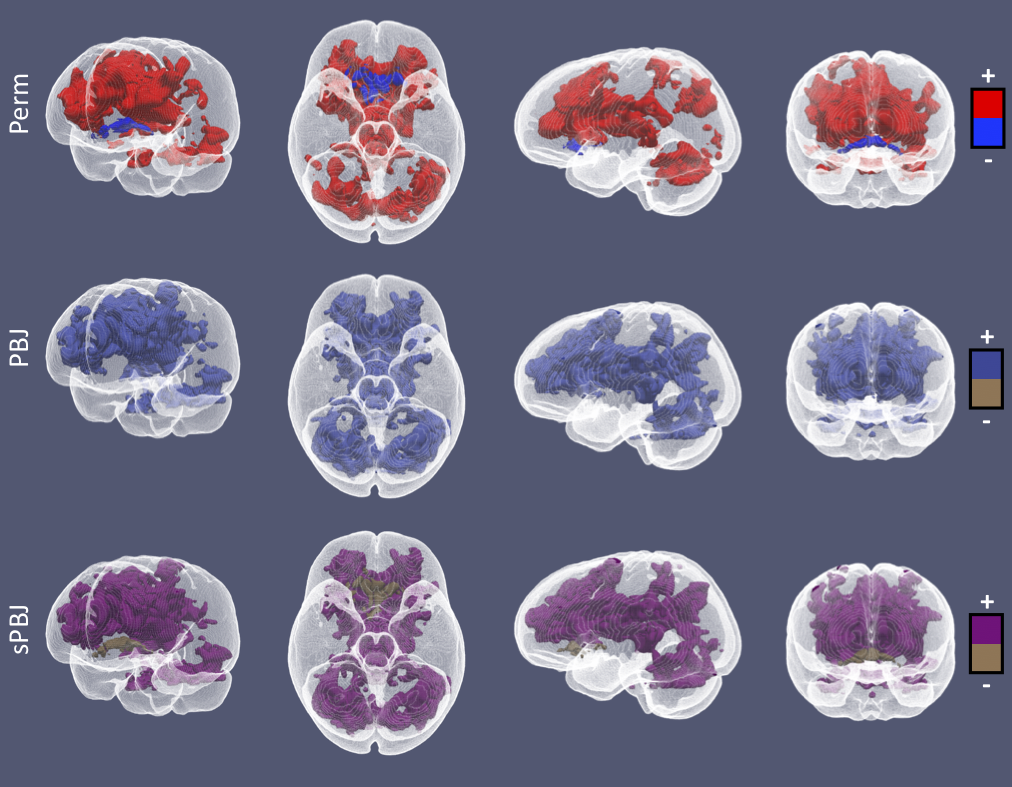}
\caption{Significant results using the three SEI procedures that controlled the FWER for the effect of $d'$ in Section \ref{sec:simulation} for a CFT of 0.005 and a spatial extent $p$-value less than 0.05. The $+$ color for each map indicates clusters that are positively activated with increasing task performance and the $-$ indicates clusters that are deactivated with increasing task performance.}
\label{fig:nback}
\end{figure}

\section{Discussion}

The sPBJ procedure is a robust multileveling approach because it produces consistent covariance function estimators even when the covariance model is misspecified.
When using the sPBJ, investigators can use arbitrary estimates of noise as weights to improve signal and still be confident that the FWER will be near nominal level in large samples.
The permutation, GRF, and PBJ procedures that do not have consistent standard error estimates can have FWERs that grow as the sample size increases.
Given that the sPBJ has closer to nominal FWER than the PBJ and permutation procedures, while maintaining equal or superior power, it is an appealing option for SEI in small and large samples.

Our findings support those presented in other papers \citep{eklund_cluster_2016,silver_false_2011,greve_false_2018}: GRF-based methods do not have robust FWER control across a range of parameter settings, regardless of whether heteroskedasticity is present.
The only variable that produces enough heteroskedasticity to affect the permutation procedure is the motion covariate; the permutation procedure performs well with $d'$, which is a typical covariate of interest.
Thus, a model that assumes subject exchangeability is appropriate for the covariate we consider here, albeit with a slight loss in power.
It is possible that covariates that are highly correlated with motion (e.g. ADHD diagnosis) may have FWERs that are far from the nominal level using the permutation procedure.

The bootstrap approach to approximating the probability \eqref{eq:pvalue} is quite general and need not be used with the two models presented here.
So long as the joint distribution of the test statistics is asymptotically normal or chi-squared and a square root of the covariance matrix can be computed, then the PBJ procedure can be used to sample from the joint distribution of the test statistics.
This suggests the method can be used to study patterns generated by machine learning methods \citep[see e.g.][]{gaonkar_analytic_2013}.

Here, we perform analyses and simulations of task fMRI data, which is one of the most common multilevel neuroimaging models.
However, the PBJ procedure is appropriate for other types of imaging modalities as subject-level weighting can be used for models that are not typically considered multilevel models.
For example, estimates of voxel level variance can be used to deweight noisy cerebral blood flow images computed from arterial spin labeled MRI or measurements of T1 image quality can be used to deweight images affected with motion artifact \citep{rosen_quantitative_2018}.
The sPBJ procedure can also be used for meta-analyses of publicly available statistical maps \citep[e.g.][]{gorgolewski_neurovault.org:_2015,maumet_sharing_2016}.
Typically, variance maps are needed to accurately fit such multilevel models, but variance maps are seldom made publicly available.
An advantage of the sPBJ procedure for meta-analysis is that it does not require variance maps for valid statistical inference, so weights could be determined from other reported variables, such as the sample size.

There are two apparent limitations of the approach:
First, the sPBJ procedure relies on consistency, which only provides guarantees for FWER control as the sample size gets large.
Our simulations demonstrate that the procedure has nominal FWER control in relatively small samples (e.g. $n=25$), however this convergence is likely slower in very heavily skewed datasets.
Second, the distribution of the statistical map \eqref{eq:wishart} is unknown when $m_1>1$, so the procedure is not able to perform tests of multiple parameters at each location.
Future work is necessary to derive the distribution of the statistical map in the case when $m_1>1$ and develop efficient methods to sample from the joint distribution of the statistical image.
Despite these limitations, the sPBJ procedure is a robust multilevel method for SEI that can be used to further our understanding of brain-behavior associations.

 \section*{Acknowledgements}
 We used BrainNet Viewer, ITK-SNAP, and Paraview to produce the figures displayed in this paper \citep{xia_brainnet_2013,yushkevich_itk-snap:_2016,ayachit_paraview_2015,madan_creating_2015}.

 \section*{Funding}
 RC2 grants from the National Institute of Mental Health MH089983 and MH089924. National Institutes of Health grants: (5P30CA068485-22 to S.N.V.); (R01MH107235 to R.C.G., R.E.G., R.T.S.), (R01MH107703 to T.D.S., R.T.S.), (R01MH112847 to TDS, KR, and RTS) and (R01NS085211 to R.T.S.).

\setlength{\bibsep}{0pt}
\begin{singlespace}
\bibliographystyle{biom}
\bibliography{./MyLibrary}
\end{singlespace}

\renewcommand{\theequation}{S\arabic{equation}}
\renewcommand{\thesection}{S\arabic{section}}

\setcounter{section}{0}
\setcounter{equation}{0}

\section{Supplementary Materials}

\subsection{Notation}
\label{sec:notation}

The following objects are defined for Theorem \ref{thm:betacovariance}.
\begin{align*}
\begin{split}
\Alpha_\beta(v) = & \E S^{-1}_i(v) X_{i1}^T X_{i1 } - \left\{\E X_{i1}^T S^{-1}_i(v) X_{i0} \right\} \left\{\E S^{-1}_i(v) X_{i0}^T X_{i0} \right\}^{-1} \left\{ \E X_{i0}^T S^{-1}_i(v) X_{i1} \right\} \\
\Omega_{\beta}(v, w) = & \left\{G(v)^T \Omega_{X_{0},X_{0}}(v, w) G(w) + \Omega_{X_{1},X_{1}}(v, w) \right.\\
& \left. - G(v)^T \Omega_{X_{0},X_{1}}(v, w) - \Omega_{X_{1},X_{0}}(v, w) G(w) \right\} \\
G(v) =  & \{\E S^{-1}_i(v) X_{i0}^T X_{i0}\}^{-1} \left\{\E S^{-1}_i(v) X_{i0}^T X_{i1}\right\} \\
\Omega_{A,B}(v, w) = &\E\left[ \{S^{-1}_i(v)S^{-1}_i(w)\}\{Y_i(v) - X_i\zeta(v)\}\{Y_i(w) - X_i\zeta(w)\} A_i^T B_i\right]
\end{split}
\end{align*}

The following values are defined for the estimator \eqref{eq:sigma} of Theorem \ref{thm:betacovariance}.
\begin{align}
S^{-1}(v) & = \diag\{[S^{-1}_1(v), \ldots, S^{-1}_n(v)]\} \in \R^{n\times n} \notag\\
P^{X_0}(v) & = \left\{ I - S^{-1/2}(v)X_0 (X_0^T S^{-1}(v) X_0)^{-1} X_0^T S^{-1/2}(v) \right\} \notag\\
\hat{A}_\beta(v) & = X_1^T S^{-1/2}(v)P^{X_0}(v)  S^{-1/2}(v) X_1 \label{eq:alpha}\\
Q_{i,i}(v) &=  P^X_{i,i}(v)^{-1}S_{i}^{-1/2}(v)\left\{ Y_i(v) - X_i\hat\zeta(v) \right\} \label{eq:Q}\\
Q_{i,j}(v) & = 0, \notag
\end{align}
and $P^X(v)$ is similarly defined to $P^{X_0}(v)$.
The use of $P^X_{i,i}(v)^{-1}$ in \eqref{eq:Q} corresponds to the heteroskedasticity-consistent estimator suggested by \citet{long_using_2000}.

\subsection{Assumptions for Theorems \ref{thm:betacovariance} and \ref{thm:m1distribution}}

The following are assumptions to ensure the covergence of the estimator \eqref{eq:sigma} in Theorem \ref{thm:betacovariance}.

\begin{equation}
\label{eq:convergenceassumptions}
\begin{aligned}
&\lim_{n\to\infty} n^{-1}\sum_{i=1}^n\text{Var}\left\{\left(X^T_{i0} S^{-1}_i(v)^{-1} X^T_{i0} \right)_{j,k} \right\}  < \infty \\
&\lim_{n\to\infty} n^{-1}\sum_{i=1}^n\text{Var}\left\{\left( X^T_{i1} S^{-1}_i(v)^{-1} X^T_{i1} \right)_{j,k} \right\} < \infty \\
&\lim_{n\to\infty} n^{-1}\sum_{i=1}^n\text{Var}\left\{\left( X^T_{i0} S^{-1}_i(v)^{-1} X^T_{i1} \right)_{j,k} \right\}  < \infty \\
&\lim_{n\to\infty} n^{-1}\sum_{i=1}^n\text{Var}\left\{Y_i(v)\right\}  < \infty.
\end{aligned}
\end{equation}

We use Theorem 18.14 of \citet{van_der_vaart_asymptotic_2000} to prove weak convergence of $T_n(v)$ which requires an additional assumption:
for all $\epsilon, \eta>0$ there exists a partition of $\mathbb{B}$ into finitely many sets $B_1, \ldots, B_K$, such that
\begin{equation}
\label{eq:vandervaart}
\lim_{n\to\infty} \P\left( \sup_i \sup{v,w\in B_i} \lvert T_n(v) - T_n(w) \rvert \right)<\eta.
\end{equation}

\subsection{Proofs of Theorems \ref{thm:betacovariance} and \ref{thm:m1distribution}}
\label{sec:proofs}

\begin{proof}{Proof of Theorem \ref{thm:betacovariance}}

Finding the covariance of $\hat\beta(v)$ and $\hat\beta(w)$ for $v,w\in B$ involves extracting the bottom right block matrix of
\begin{equation}
\Sigma_\zeta(v,w) =
\begin{bmatrix}
\Sigma_\alpha(v,w) & \Sigma_{\alpha,\beta}(v,w) \\
\Sigma_{\beta, \alpha}(v,w) & \Sigma_{\beta}(v,w)
\end{bmatrix}.
\end{equation}
Because $\Sigma_\zeta(v,w)$ is unwieldy the derivation is messy, but not intellectually stimulating and so is not shown here.

The estimator \eqref{eq:sigma} can be rewritten

\begin{align*}
& \hat \Alpha_\beta(v)^{-1} \left\{ \hat \Omega_{X_1,X_1}(v,w) - \hat G(v)^T \hat \Omega_{X_0,X_1}(v,w)  \right.\\
& \left. - \hat \Omega_{X_1,X_0}(v,w) \hat G(w) + \hat G(v)^T \hat \Omega_{X_0,X_0}(v,w) \hat G(w) \right\}\hat\Alpha_\beta(v)^{-1}
\end{align*}
where
\begin{align*}
\hat \Omega_{A,B}(v,w) & = A^T S^{-1/2}(v) Q(v) Q(w) S^{-1/2}(w) B \\
\hat G(v) & = (X^T_0 S^{-1}(v) X_0)^{-1} X_0^T S^{-1}(v) X_1,
\end{align*}
and $\hat \Alpha_\beta(v)$ is given in \eqref{eq:alpha}.

It follows that if
\begin{equation}
\label{eq:convergence1}
\begin{aligned}
n^{-1} X^T_0 S^{-1}(v) X_0 & \to_P \lim_{n\to\infty} n^{-1}\sum_{i=1}^n\E X_{i0}^T S^{-1}_i(v) X_{i0} \\
n^{-1} X^T_1 S^{-1}(v) X_1 & \to_P \lim_{n\to\infty} n^{-1}\sum_{i=1}^n\E X_{i1}^T S^{-1}_i(v) X_{i1} \\
n^{-1} X_0^T S^{-1}(v) X_1 & \to_P \lim_{n\to\infty} n^{-1}\sum_{i=1}^n\E X_{i0}^T S^{-1}_i(v) X_{i0},
\end{aligned}
\end{equation}
and
\begin{equation}
\begin{aligned}
\label{eq:convergence2}
n^{-1} X_0^T S^{-1/2}(v) Q(v) Q(w) S^{-1/2}(w) X_0 & \to_P \lim_{n\to\infty} n^{-1}\sum_{i=1}^n\E X_{i0}^T S^{-1/2}_i(v) Q(v) Q(w) S^{-1/2}_i(w)X_{i0} \\
n^{-1} X_1^T S^{-1/2}(v) Q(v) Q(w) S^{-1/2}(w) X_1 & \to_P \lim_{n\to\infty} n^{-1}\sum_{i=1}^n\E X_{i1}^T S^{-1/2}_i(v) Q(v) Q(w) S^{-1/2}_i(w)X_{i1} \\
n^{-1} X_0^T S^{-1/2}(v) Q(v) Q(w) S^{-1/2}(w) X_1 & \to_P \lim_{n\to\infty} n^{-1}\sum_{i=1}^n\E X_{i0}^T S^{-1/2}_i(v) Q(v) Q(w) S^{-1/2}_i(w)X_{i1},
\end{aligned}
\end{equation}
then $\hat \Sigma_\beta(v,w)$ is a consistent estimator of $\Sigma_\beta(v,w)$.
Equations \eqref{eq:convergence1} hold by the weak law of large numbers given assumptions \eqref{eq:convergenceassumptions}.
Note, because $Q(v)$ is a function of $\hat \zeta$ we must use a weak law that allows the averages in \eqref{eq:convergence2} to depend on parameter estimates such as Theorem 7.3 of \citep[p. 329]{boos_essential_2013}.
The assumptions \eqref{eq:convergenceassumptions} are satisfactory to use this theorem as the function
\begin{align*}
\left\lvert\frac{\partial}{\partial\zeta_j(v)}x_{i0}^T s^{-1/2}_i(v) q(v) q(w) s^{-1/2}_i(w)x_{i1} \right \rvert_\infty
 \le & ~2\left\lvert x_{i0}^Tx_{i0}^T s^{-1/2}_i(v) s^{-1/2}_i(w)x_{i1} p^X_{i,i}(v) \right \rvert_\infty y_i(v)^2 \\
& =: M(x_i, y_i, s_i, p_{i,i}),
\end{align*}
 in a neighborhood of the true parameter value $\zeta$ and $\E M(X_i, Y_i, S_i, P_{i,i}) < \infty$.
\end{proof}

\begin{proof}[Proof of Theorem \ref{thm:m1distribution}]

Let $v_1, \ldots, v_V$ be fixed for $V<\infty$.
Note that for $m_1=1$ $\Sigma_\beta(v, v) \in \R$.
The convergence of $n^{-1/2}(\hat \beta(v_1) - \beta(v_1)) \to_d N\left\{ 0, \Sigma_\beta(v_1, v_1) \right\}$ follows because our estimating equation \eqref{eq:ee} satisfies a set of regularity conditions \citep[see e.g.][]{boos_essential_2013}.
The additional assumptions \eqref{eq:AlphaOmega} and \eqref{eq:convergenceassumptions} are required to so that $Y_i$ and $S_i$ do not need to be identically distributed.
These assumptions are satisfactory for the convergence of the covariance
\[
\lim_{n\to\infty}\Cov\left\{ n^{-1/2}(\hat \beta(v_j) - \beta(v_j)), n^{-1/2}(\hat \beta(v_k) - \beta(v_k)) \right\}
= \Sigma_\beta(v_j, v_k)
\]
Thus, the continuous mapping theorem implies
\begin{align*}
Z_V = n^{-1/2}
\begin{bmatrix}
\hat\Sigma^{-1/2}_\beta(v_1, v_1) (\hat \beta(v_1) - \beta(v_1)) \\
\vdots \\
\hat\Sigma^{-1/2}_\beta(v_1, v_1) (\hat \beta(v_V) - \beta(v_V))
\end{bmatrix}
\to_d N(0, \mathrm{T})
\end{align*}
where $\mathrm{T}_{j,k} = \Sigma^{-1/2}_\beta(v_j, v_j)\Sigma_\beta(v_j, v_k)\Sigma^{-1/2}_\beta(v_k, v_k)$.
$\diag\{Z_V Z_V^T\} \sim \diag\{ \mathcal{W}(1, \mathrm{T})\}$ follows a diagonal Wishart distribution by definition.
The weak convergence of the stochastic process $T_n(v) \to_L $ follows by \eqref{eq:vandervaart} and Theorem 18.14 of \citep{van_der_vaart_asymptotic_2000}.

\end{proof}

\end{document}